\newcommand{\expect}[2]{\mathbb{E}_{#1}\left[#2\right]}
\newcommand{\prob}[1]{\Pr\left[#1\right]}
\newtheorem{definition}{Definition}
\newtheorem{lemma}{Lemma}
\newtheorem{proposition}{Proposition}
\newtheorem{remark}{Remark}
\title{Would Friedman Burn your Tokens?}
\author{
  Aggelos Kiayias\\
  University of Edinburgh, IOG\\
  \texttt{akiayias@inf.ed.ac.uk}
  \and
  Philip Lazos\\
  IOG\\
  \texttt{philip.lazos@iohk.io}
  \and
  Jan Christoph Schlegel\\
City, University of London\\
  \texttt{jansc@alumni.ethz.ch}
}
\date{\today}
\begin{document}

\maketitle

\begin{abstract}
Cryptocurrencies come with a variety of {\em tokenomic} policies as well as aspirations of desirable monetary characteristics that have been described by proponents as ``sound money'' or even ``ultra sound money.'' These propositions are typically devoid of economic  analysis so it is a pertinent question how such aspirations fit in the wider context of monetary economic theory. In this work, we develop a framework that determines the optimal token supply policy of a cryptocurrency, as well as investigate how such policy may be algorithmically implemented. Our findings suggest that the optimal policy complies with the Friedman rule and it is dependent on the risk free rate, as well as the growth of the cryptocurrency platform. Furthermore, we demonstrate a wide set of conditions under which such policy can be implemented via contractions and expansions of token supply that can be realized algorithmically with block rewards,  taxation of consumption and burning the proceeds, and blockchain oracles.
\end{abstract}

\section{Introduction}

Tokenomics, referring to the algorithmic adjustment of token supply in blockchain systems, is one of the most contested aspects of cryptocurrencies. The most prominent example, Bitcoin, adopts a fixed total supply monetary policy with a progressively diminishing release schedule that follows a geometric distribution and converges to a supply of 21 million tokens~\cite{nakamoto2008}. This feature of bitcoin has been touted as one of its main advantages --- namely the fact that the protocol enforces a fixed total supply (in conjunction with its other features)  has been suggested to imply ``sound money''~\cite{bitcoin-standard}. Contrary to this, the second most popular cryptocurrency, Ethereum, adopts an ever growing supply that can turn deflationary depending on usage --- this is the case due to the fee burning mechanism adopted through the EIP-1559 upgrade in the Ethereum system. This policy has been advocated as ``ultra-sound money''  by Ethereum proponents.\footnote{See e.g., \url{https://ultrasound.money/}.}

Despite the variety of mechanisms proposed, there has been little 
analysis of cryptocurrency monetary policies.
It is thus of interest to investigate how such mechanisms stack up against conventional monetary theory. 
%
One of the foundational results of pure monetary theory, is the celebrated Friedman rule~\cite{friedman69} which posits that monetary supply should be adjusted to keep the nominal interest rate close to zero. 
Following this tenet, the optimal money supply should seek a rate of inflation or deflation equal to the real {\em risk free} return rate in the economy.
 While theoretically appealing, such policy is rarely observed in reality and a variety of prior work investigate the conditions under which it is optimal and implementable 
 \cite{Wilson1979,CoKo1998,Ireland2002}.

Viewing cryptocurrencies through the lens of the Friedman rule, we ask what is the optimal token supply policy when these systems are viewed in the context of wider economic activity and furthermore how such optimal policy can be implemented algorithmically -- i.e., without relying on a central bank that controls the money supply.  This is of particular interest in this context, since monetary supply adjustments seem  to be hard to implement in practice in the context of token economies. More specifically,  expanding the token supply is easy to implement, for example through issuing block rewards, while on the other hand, a decrease in the token supply, can be more complicated to implement. 

One policy that has been popular is to tax transactions and to burn the tax proceeds, as suggested in the Ethereum blockchain after the EIP-1559 upgrade.\footnote{We note that the Ethereum base fee as handled in EIP-1559 can be interpreted as a transaction tax.}
Some alternative mechanisms, possible in theory, could be to decrease the token supply by maintaining an on-chain treasury with some part of the treasury being held in different assets. Given such treasury, the protocol can buy back tokens and hence reduce the money supply. In practice, this poses the challenge of funding such treasury as well as
adjusting the composition of its assets as prices change. 
Another alternative is levying a tax on token holding. However, if this policy is implemented uniformly and non-discriminatory (which is usually the only feasible implementation of such policy), this is just a form of re-basing that changes the unit of account without having any real economic effect.
 Taxing transactions has the potential downside that they could reduce user (and validator) welfare because they make transactions more expensive compared to actual competitive pricing. On the other hand, an upside of consumption based taxation in case of congestion is that it may satisfy the dual function of managing congestion as well as reducing the token supply.

\subsection{Our Contributions}
In this paper we set out to develop a framework that can determine the optimal token supply policy when a cryptocurrency is viewed in the context of the wider economic activity available to agents. We also focus on ways such optimal policy can be implemented using a taxation of consumption approach as the basic implementation mechanism. 
The higher objective is to give a proper theoretical foundation to a class of tokenomics policies for blockchain systems that have been discussed in the cryptocurrency community such as taxation and fee burning but only heuristically motivated, to the best of our knowledge.

We propose a fully micro-founded model of the token economy in the spirit of models popular in the micro-foundation of liquidity and payments (see the textbook-\cite{textbook} for an exposition of these type of models). In our model, users can use tokens to pay for transactions fees in the blockchain system which gives a natural demand for tokens. There is also an external (centralized) market where users and validators/miners can exchange tokens for fiat money and vice versa. We assume that there are frictions that prevent agents to transact in both markets at the same time. Moreover, we assume that expenditure on transactions is a small part of users' overall consumption expenditure and that they receive their income externally, so that we can model their choices by quasi-linear preferences in the numéraire good (usually we think of dollars as numéraire). This allows for a tractable model of the token economy that can be solved using standard techniques. As in other models of monetary economies there is a plurality of equilibria that depends on the agents' beliefs about the future value of the token. There is an equilibrium in which agents believe that tokens will not have value in the future, as well as equilibria in which they do. We focus on steady state equilibria in which agents believe that and hence the value of the token stock \emph{does} grow proportionally to the transaction volume in the platform.  

As a first result, we derive the robust finding that in the (steady state) equilibrium of the token economy the Friedman rule is optimal: We should target a token supply so that the expected return on token holding equals the risk-free rate which we can introduce into the cryptocurrency platform via a platform governance mechanism or a blockchain oracle~\cite{DBLP:conf/ccs/ZhangMMGJ20}. Such policy can however not always be (optimally) implemented. If a reduction of the token supply needs to be implemented through taxing transactions and the system is not congested such taxation is suboptimal. Not taxing and leaving the token supply unchanged would outperform the deflationary policy. However, if the system is congested, then tax and burn can be optimal. We derive precisely the conditions under which such policy can be optimal, approximates the Friedman rule, and derive the optimal taxation rate. 

For this purpose, we enhance our model by introducing preference shocks, similarly as in \cite{prat2021}. Preference shocks lead to uncertainty of the usefulness of tokens for consumption in the next period. If agents are homogeneously affected by shocks, reducing the token supply in the future through tax and burn schemes, has no welfare improving effect. However, if agents are heterogeneously affected by shocks, and there is congestion, due to increased demand of positively affected agents, then taxing consumption in that state has a positive effect on consumption in other states, thus enhancing overall welfare.

\subsection{Related work}
There has been a lot of recent work on economic aspects of cryptocurrencies under the umbrella term "tokenomics". Many researchers focus on the role of tokens as a medium of exchange. The token is the main medium of exchange in the economy~\cite{pagnotta2022} or competing with an alternative medium of exchange~\cite{garratt2018,schilling2019}. In the former case, the feedback between different aspects of the system such as system security~\cite{pagnotta2022,budish2022} and equilibrium prices is studied. In the latter case, variants of the exchange rate indeterminacy result of~\cite{kareken1981} hold. In contract to this previous work, our model is more in line with models of utility tokens where tokens have a particular role in the ecosystem~\cite{prat2021,Cong2021,CONG2022} rather than functioning as a general medium of exchange. Our model assumes that economic activity involving the token is a small part of overall economic activity (for example users would usually not receive their income in the token but in dollar and blockchain transactions is a small part of their overall consumption expenditure). Thus, our model resembles more these models of utility tokens. However, our work differs by studying the question of optimal token supply. The closest work in terms of the overall research question might be \cite{Cong2021}, who however focuses on the revenue of the platform operator not on social welfare. The set up of our model also resembles the work of \cite{prat2021} and \cite{pagnotta2022} who study a discrete model with trading frictions and shocks in the spirit of new monetarist models \cite{textbook}.

Our work also relates to work on the micro foundation of money. As observed by~\cite{kocherlakota1998}, a model where money (tokens) is essentially must satisfy several properties, among them trading frictions, limited commitment and the absence of a public record keeping technology. Researchers have provided a variety of models in this spirit, in particular~\cite{lagos2005} provide a tractable model of a monetary economy, by alternating a centralized market with a decentralized market. The centralized market (together with quasi-linearity of consumption) guarantees that the money distribution re-balances. We use the alternating market set up with the convention that the centralized market is the dollar-token market, which gives a natural interpretation in our application.

The EIP-1559 is the best-known (and most thoroughly studied) transaction fee mechanism which attempts to do both demand estimation and also burns parts of the transaction fees. The foundations of its study were laid by Roughgarden in \cite{roughgarden2020transaction}, who (among other results) identified that transaction fee burn is necessary to prevent certain types of `attacks', such as validators including their own fake traffic to inflate the prices. The efficacy of this particular demand estimation has been studied in \cite{leonardos2021dynamical, reijsbergen2021transaction}, where it was shown that despite the possibility of chaotic behaviour on the prices, the actual target block size is consistently achieved. In addition to EIP-1559, alternative mechanisms have been proposed (such as \cite{ferreira2021dynamic, chung2023foundations, kiayias2023tiered}) focusing on improving specific properties, such as stability, truthfulness or welfare.


\section{Model}
We consider a Lagos Wright~\cite{lagos2005} type of model: Time is discrete and infinite $t=0,1,\ldots$. There are two types of agents, users and validators. There is a unit mass of users and each user $i\in[0,1]$ has a per period utility $u_{i,t}(a, \sigma)$ of consuming a fraction $a$ of the blockspace\footnote{We are assuming that this is a non-atomic game: a single user with $a = 1$ would not consume the entire blockspace (and in fact would have no effect on other users, no matter her individual $a$) given a utility shock $\sigma_t$. However, a larger mass of users would consume a measurable fraction of the blockspace, and in fact may not be able to consume all of it if the cost of validation exceeds their utility}. In the following, we usually introduce time dependency of utility by assuming that utility grows deterministically, so that we can write $u_{i,t}(a, \sigma)=A_tu_{i}(a, \sigma)$ for functions $\{u_i\}_{i\in [0,1]}$ and positive numbers $\{A_t\}_{t=0,1,\ldots}$. There is a unit mass of identical validators, who can chose to accommodate an amount of $s$ total user activity for cost $c(s)$. We assume that all $u_i$ are concave and $c$ is convex.

The timing is as follows: At the beginning of each period $t$, users receive a shock $\sigma_t \sim F$, choose their activity level $a_{i,t}$ and pay for their activity a per unit fee $p_t$ to the validators. Payments for platform usage can only be settled in tokens. At the end of each period, users and validators decide on their token holding $m_t$ which are used in the next period $t+1$. Therefore, if a user wants to participate with level $a_i$ in platform usage in period $t$ she needs to hold at least $m_{i,t-1}$ dollar worth of tokens at the end of period $t-1$ such that $p_{t}a_i=(1+r_t^T)m_{i,t-1}$ where $r_t^T$ is the return on token holding between period $t-1$ and $t$. We assume that there is a competitive market at the end of the period at which users and validators can exchange tokens for the numéraire (in the following we use "dollar" as name for the numéraire) and vice versa. We assume that the activity on the platform is a small part of the total economic activity of the users (e.g. they receive their income in dollars, do most of their consumption in dollars etc.). In other words we do a partial equilibrium analysis: preferences are quasi-linear in dollars and the agents in the economy can hold arbitrary (possibly negative) dollar balances. Users discount time with the same discounting rate $\beta=1/(1+r)$ where $r$ is the real (riskless) return. \\

\noindent
For convenience, we summarize the definitions of all commonly used symbols in Table~\ref{table:notation}.
\begin{table}
\centering
\begin{tabular}{|c | c|}
    \hline
        Symbol & Definition \\ [0.5ex] 
    \hline\hline
        $u_i$ & utility of user $i$\\ 
    \hline
        $c$ & cost of validation\\
    \hline
        $a_{i, t}$ & fraction of the blockspace used by user $i$ during time $t$\\
    \hline
        $s_{j, t}$ & on-chain activity validated by validator $j$ during time $t$\\
    \hline
        $m_{i, t-1}$ & token holdings (in dollars) of user $i$, purchased at time $t-1$ to use at time $t$\\
    \hline 
        $p_t$ & per unit fee for usage at time $t$\\
    \hline
        $\theta_t$ & percentage tax on top of price $p_t$ that is burned at time $t$\\
    \hline
        $r$ & risk-free rate\\
    \hline
        $r_t^T$ & return on token holding at time $t$\\
    \hline
        $\beta$ & time discount factor, equal to $1 / (1+r)$\\
    \hline
        $q_t$ & dollar exchange rate of tokens at time $t$, such that $r_t^T = (q_t - q_{t-1}) / q_{t-1}$\\
    \hline
        $A_t$ & technology parameter at time $t$\\
    \hline
        $\sigma_t$ & utility shock received by the users at the beginning of time $t$\\
    \hline
\end{tabular}
\caption{An overview of the symbols used. When an index $i$ (or $j$) is missing from a symbol, we are integrating over all users or validators (e.g., $a_t = \int_0^1 a_{i, t} di$)\label{table:notation}}
\end{table}

\subsection{Equilibrium}

The utility over time for user $i$ with token holdings $m_{i, t}$ is given by
\begin{equation}\label{def:user_utility}
    \sum_{t=1}^{\infty}\beta^t 
        \expect{\sigma}{
            \max_{a_{i, t}\geq0, p_t a_{i, t}\leq (1+r_t^T)m_{i, t-1}}
            \left(u_{i,t}(a_{i, t}, \sigma_t)+(1+r_t^T)m_{i, t-1} - p_t a_{i, t}\right) - m_{i, t - 1} / \beta
            }.
\end{equation}

For readability, we will drop the subscript $i$ from $a_{i, t}$ and $m_{i, t}$ when it is clear from context. Note that given the token holding $m_{i, t-1}$, the user can determine her activity level after the shock has been revealed.





For validator $j$, we correspondingly obtain the expression
\begin{equation}\label{def:utility_validators}
\sum_{t=1}^{\infty}\beta^t \expect{\sigma_t}{
    \max_{s_{j, t}}
    \left(p_ts_{j, t}-c(s_{j, t})+(1+r_t^T)m_{j, t-1}\right)-m_{j, t - 1} / \beta},
\end{equation}
with the added restriction that $s_t = \int s_{j, t} dj \le \int a_{i, t} di$.
It is immediate to see that if the return on token holding $r_t^T$ is smaller or equal the real return $r$, then validators will choose to hold zero token balances at the end of each period (selling all tokens obtained as revenue from validation), $m_{j,t}=0$ for all $j$ and $t$. While the validator cost is not directly affected by the shock, it depends on the activity $a_t$ which is.

\subsubsection{First best}\label{sec:first_best}
As a benchmark, it is instructive to characterize the socially optimal allocation. The optimal level of participation is obtained by maximizing within each period for each realization of the shock:
\[\max_{a_{i,t}\geq0,i\in[0,1]}\int_0^1 u_{i,t}(a_{i,t}, \sigma_t)di-c\left(\int_0^1 a_{i,t}di\right)\]
which gives the first order conditions
\begin{equation}\label{eq:first_best_deterministic}
    \frac{u_{i,t}'(a_{i,t}, \sigma_t)}{c'(\int a_{i,t}di)}=1\quad\text{for each }i\in[0,1]
\end{equation}
In case that in the critical point we have $\int_0^1a_{i,t}di>1$, we can optimally ration the blockspace usage $\hat{a}_{i, t} \ge 0$ such that $\int_0^1 \hat{a}_{i,t}di = 1$ and $ u'_{i, t}(\hat{a}_{i, t}, \sigma_t) = C > 0$ for all $i$\footnote{This may not be possible for some sets of $u_{i, t}$ (e.g., if some utility is constantly zero). In this case, these users have $\hat{a}_{i, t} = 0$ and the condition holds for the remainder.}. 
\subsubsection{Competitive Equilibrium}
In the following we denote by $a^*_{i,t}(p, m, \sigma)$ the blockspace demand of user $i$ in period $t$, i.e. the solution to the optimal consumption problem (which is essentially one term of the sum in \ref{def:user_utility})
\begin{equation}\label{eq:user_opt_activity}
\max_{a\geq0,pa\leq m} u_{i,t}(a, \sigma) - p a
\end{equation}
given a fee of $p$, token holdings of $m$ and shock $\sigma$.
Moreover, we denote by  $\{m^*_{i,t}(p,r_t^T)\}_{t=0,1,\ldots}$ the token demand of user $i$, i.e. the solution to the problem
\begin{equation}\label{eq:user_opt_mit}
\max_{\{m_{i,t}\}_{t=0,1,\ldots}}
    \sum_{t=1}^{\infty}\beta^t 
    \expect{\sigma_t}{
        \left(u_{i,t}(a_{i, t}^*)+(1+r_t^T)m_{i, t-1}-p_t a_{i, t}^*\right)-m_{i, t - 1} / \beta
    },
\end{equation}
where we omitted the arguments from $a_{i, t}^*((1+r_t^T)m_{i, t-1}, p_t, \sigma_t)$ for readability. We will often do so when they are clear from the context.
The side of validators is simple: from \ref{def:utility_validators} to maximize their utility for a given price, they just need to accommodate enough activity at time $t$ such that $c'(s_t) = p_t$ (otherwise $s_t = 1$, since they can only accommodate up to this much activity).

Putting it all together we have the following definition.

\begin{definition}
    An equilibrium is described by prices $\{p_t, q_t\}_{t=0,1,\ldots}$, where $q_t$ is the dollar exchange rate of tokens (such that $r_t^T=\frac{q_t-q_{t-1}}{q_t}$), such that the usage levels 
    $\{a_{i,t}^*\}_{i\in[0,1],t=0,1,\ldots}$,
    token balances measured in dollar terms 
    $\{m^*_{i,t}\}_{i\in\mathcal{N},t=0,1,\ldots}$ and token supplies (measured in number of tokens) 
    $\{M_t\}_{t=0,1,\ldots}$ satisfy the following:
\begin{gather*}
(c'(a^*_t)=p_t\text{ and }a^*_t<1)\text{ or } (c'(a^*_t) \le p_t \text{ and } a^*_t = 1)\\
a^*_t=\int_0^1 a^*_{i,t}\left(p_t,(1+r_t^T)m^*_{i,t-1}, \sigma_t\right)di\\ m^*_t=\int_0^1 m^*_{i,t}\left(p,r_t^T\right)di=q_tM_t.
\end{gather*}
\end{definition}
\begin{remark}
    The first condition has two cases, depending on the on-chain congestion. Either the price is such that validator supply and user demand are in equilibrium, with total usage less than 1 \emph{or} total usage is exactly equal to the blockchain throughput and validators validate all of it. Given that the utilities are concave and the costs convex, for a specific $q_t$ there may exist only one corresponding $p_t$ leading to an equilibrium, where one of the two cases of the first conditions are met.
\end{remark}

\subsubsection{Steady State Equilibrium Selection}
There are various token equilibria that are characterized by different beliefs about the dollar value of the token supply over time, $\{q_tM_t\}_{t=0,1,\ldots}$, e.g. there is always an equilibrium where all agents believe that tokens have no value.\footnote{While there are equilibria in which tokens have no value, in our model all equilibria in the dollar-token market will be such that the return on token holding $r_t^T$ is at most the real risk-less return $r$. This is an artifact of the removal of wealth effect and borrowing constraints through quasi-linear preferences in dollar holdings: assuming quasi-linearity, the agents are risk neutral. Thus, if agents would expect $r_t^T>r$, agents would demand an infinite amount of tokens. However, excess expected return on token holding, could be captured in an extended version of our model, e.g. by adding speculators in a similar way as \cite{mayer2021}.} Equilibria where the dollar value of the token supply is positive and grows with the technology parameter (which scales all user utilities) seem realistic.
\begin{definition}[Steady State Equilibrium]
\[m_t = q_tM_t = \frac{A_t}{A_{t-1}} q_{t-1}M_{t-1}.\]
In that case \[r_t^T=\frac{q_t-q_{t-1}}{q_{t-1}}=\frac{A_t}{A_{t-1}}\cdot\frac{M_{t-1}}{M_{t}}-1.\]
\end{definition}
Notice that we have not added the impact of transaction fee burning to the token supply $M_t$. The equilibrium definition only depends on the final $M_t$, which can be affected by different policies.

\subsubsection{Simplifying the Notation}
In the first sections, where the proofs are generally simpler, we follow the notation fairly closely, with only minor deviations when the meaning is absolutely clear from context. However in later sections, once some familiarity has been established with the indices and arguments to commonly used functions, we will increasingly omit them for readability and only re-introduce them to draw particular attention to them.

\section{Optimality of the Friedman rule}
We begin by showing that the Friedman rule from monetary policy (i.e., setting the rate of inflation (or deflation) of $M_t$ to be equal to the risk-free rate) is also optimal for this token economy. We present a version of this result in two parts:
\begin{itemize}
    \item First for a deterministic setting, including technological growth.
    \item Then, we complement this result with a more general version that also incorporates  uncertainty about the users' utilities in the form of the shocks $\sigma_t$.
\end{itemize}

We assume that the technology parameter grows deterministically with a rate $\gamma$ so that
\[A_{t}=(1+\gamma)A_{t-1}.\]
which implies
\[u_{i,t}(a_t)=(1+\gamma)^t u_i(a_t)\]






\begin{proposition}\label{prop:friedman}
For a steady state equilibrium, the optimal per-period change in the token supply is given by
\[\frac{M_t}{M_{t-1}}=\frac{1+\gamma}{1+r}\]
where $\gamma$ is the usage growth and $r$ is the risk-free rate. If the token supply is optimal, then the equilibrium rate of return on token holding is equal to the risk-free rate, $r_t^T=r.$
\end{proposition}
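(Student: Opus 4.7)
The plan is to reduce the proposition to showing that the welfare-maximizing steady-state equilibrium achieves $r_t^T=r$. Under a steady state, the identity $r_t^T=\frac{1+\gamma}{M_t/M_{t-1}}-1$ is algebraic: once we prove that the optimum forces $r_t^T=r$, inverting gives $M_t/M_{t-1}=(1+\gamma)/(1+r)$. So the two statements in the proposition are equivalent, and I focus on the Friedman equality.

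First I would analyze the user's period-$t$ inner problem (\ref{eq:user_opt_activity}) by defining its value $v_{i,t}(m)$ as a function of the dollar-denominated token carry-in $m=m_{i,t-1}$, holding $p_t$ and $r_t^T$ fixed. Two regimes arise: in the unconstrained regime $u_i'(a)=p_t$ and the extra dollar of tokens is simply resold, giving $v_{i,t}'(m)=1+r_t^T$; on the binding boundary $a=(1+r_t^T)m/p_t$, the envelope theorem yields $v_{i,t}'(m)=u_i'(a)(1+r_t^T)/p_t$, which is strictly greater than $1+r_t^T$ precisely because $u_i'(a)>p_t$ is what made the constraint bind.

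Next I would plug $v_{i,t}$ back into (\ref{eq:user_opt_mit}) and take the first-order condition in $m_{i,t-1}$, which in the deterministic setting is the Euler equation $\beta\,v_{i,t}'(m_{i,t-1})=1$, i.e.\ $v_{i,t}'(m_{i,t-1})=1+r$. Case 1: if $r_t^T=r$ the interior formula already matches, the constraint is slack, $u_i'(a_{i,t}^*)=p_t$, and combining with the validators' condition $c'(a_t^*)=p_t$ recovers the first-best identity (\ref{eq:first_best_deterministic}), so the equilibrium is welfare optimal. Case 2: if $r_t^T<r$ then the FOC is incompatible with the slack regime, so for every user with positive demand the constraint binds and $u_i'(a_{i,t}^*)=p_t(1+r)/(1+r_t^T)>p_t=c'(a_t^*)$, opening a wedge between marginal utility and marginal cost that produces strictly lower aggregate welfare than first-best. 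The remaining case $r_t^T>r$ is excluded in any equilibrium by the argument in the footnote after the steady-state definition (quasi-linearity would induce infinite token demand). Combining the three cases, the welfare-maximizing steady-state equilibrium has $r_t^T=r$, and the steady-state identity then pins down $M_t/M_{t-1}=(1+\gamma)/(1+r)$.

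The main obstacle I anticipate is the congestion regime $a_t^*=1$, where $p_t$ is not pinned down by $c'$ but rather by market clearing, so the ``no wedge'' argument in Case 2 requires a bit more care: there the first-best itself already involves rationing (Section~\ref{sec:first_best}), and I would argue that the envelope characterization of $v_{i,t}'$ is unchanged, so the same Euler-equation case split still forces $r_t^T=r$ for optimality, with $p_t$ now playing the role of the shadow price of blockspace on both the social-planner and decentralized sides. A secondary subtlety is that at $r_t^T=r$ the user is indifferent about holding any excess tokens beyond her planned consumption; this is harmless because market clearing $m_t^*=q_tM_t$ pins down $q_t$ given $M_t$, so the equilibrium is still well-defined.
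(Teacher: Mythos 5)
Your proposal is correct and follows essentially the same route as the paper: the envelope/value-function packaging of the inner problem yields exactly the paper's first-order condition $u_i'(a^*_{i,t})/p_t=(1+r)/(1+r_t^T)$, which is then combined with validator optimality $c'(a_t^*)=p_t$ and the first-best benchmark to conclude $r_t^T=r$, and the steady-state identity delivers $M_t/M_{t-1}=(1+\gamma)/(1+r)$. Your explicit three-way case split on $r_t^T$ versus $r$ and the remark on indifference at $r_t^T=r$ are slightly cleaner presentational touches, but the substance, including the treatment of the congested regime via the rationing price, matches the paper's argument.
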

\begin{proof}
We first consider the case where the chain at time $t$ is un-congested in the first best \ref{eq:first_best_deterministic}:
\begin{equation}\label{eq:friedman_first_order}
\frac{A_t u_{i,t}'(1)}{c'(1)}<1\text{ for all }i\in[0,1]
\end{equation}
In this case, there exist optimal $\hat{a}_{i, t}$ such that
\[\frac{\int_0^1 A_t u_{i,t}'(\hat{a}_{i,t})di}{c'(\int_0^1\hat{a}_{i,t}di)}=1.\]
If there are no shocks or demand uncertainty, and assuming that the return on token holding is smaller than the return on the alternative asset $r_t^T\leq r$ it should be intuitively clear that it is optimal to choose token holdings so that everything is spent in the consecutive period:
\begin{equation}\label{eq:alpha_mit_connection}
(1+r_t^T)m_{i,{t-1}}=p_ta_{i,t}.
\end{equation}
This can be shown by combining \ref{eq:user_opt_activity} and \ref{eq:user_opt_mit}: if $(1+r_t^T)m_{i,{t-1}} > p_ta_{i,t}$ then either $a_{i, t}$ can be increased or $m_{i, t-1}$ decreased, staying within the feasible region and improving the outcome for user $i$.

To determine the optimal token holding we derive the following first order condition, which represents the choice of $m_{i, t-1}$ that user $i$ would have to make to maximize \ref{eq:user_opt_mit}:
\[
-1 + \beta\left(A_t u_{i,t}'(a_t^*)\tfrac{\partial a_t^*}{\partial m_{i,t-1}}+(1+r_t^T)-p_t\tfrac{\partial a_t^*}{\partial m_{i,t-1}}\right)=0
\]
We can rewrite this as:
\begin{equation}\label{eq:user_first_friedman}
A_t u'_i(a^*_{i,t})\tfrac{1+r_t^T}{p_t}+(1+r_t^T)-(1+r_t^T) 
= 
1/\beta = 1+r
\Rightarrow
\frac{A_t u'_i(a^*_{i,t})}{p_t}
=
\frac{1+r}{1+r_t^T}=1+i_t,
\end{equation}
where \[1+i_t\equiv \frac{1+r}{1+r_t^T}\] is the nominal return on the alternative investment and we have used \ref{eq:alpha_mit_connection} to obtain $\tfrac{\partial a_t^*}{\partial m_{i,t-1}} = \tfrac{1+r_t^T}{p_t}.$
Integrating over all users (where $i \in [0,1]$) and using the optimality condition for validators, $c'(a_t^*)=p_t$ we get
\[\frac{\int A_t u'_i(a^*_{i,t})di}{c'(a_t^*)}=\frac{1+r}{1+r_t^T}=1+i_t.\]
For optimal consumption, monetary policy should target $i_t=0$ (or equivalently $r_t^T=r$) so that $\hat{a_t} = \int a_{i, t}^* di= a_t^*$ from \ref{eq:friedman_first_order} and given concavity and convexity of utility and cost respectively.
Focusing again on the steady state equilibrium, we have
\[r_t^T=\frac{q_t-q_{t-1}}{q_{t-1}}=\frac{A_t}{A_{t-1}}\cdot\frac{M_{t-1}}{M_{t}}-1=(1+\gamma)\cdot\frac{M_{t-1}}{M_{t}}-1.\]
Thus the optimal policy would target a change of the token supply given by
\[1+r=(1+\gamma)\cdot\frac{M_{t-1}}{M_{t}}\Rightarrow\frac{M_t}{M_{t-1}}=\frac{1+\gamma}{1+r}.\]
The token stock should extend if $\gamma>r$ and shrink if $\gamma<r$.

Next we consider the case where the chain at time $t$ is congested in the first best \ref{eq:first_best_deterministic}:
\[
    \frac{A_t u_{i,t}'(1)}{c'(1)}>1\text{ for }i\in[0,1]
\]
As we have shown in \ref{sec:first_best}, the blockspace has to be rationed so that the total usage is exactly 1. To do this, the price $p_t$ needs to be used to limit congestion and not just to cover the costs of validating. Clearly, since for every user $u'_{i, t}(1) > c'(1)$, we would need that $p_t > c'(1)$ in order to tame the demand for the optimal outcome. Therefore, the validators will all set $s_{j, t} = 1$.

The users are faced with the identical optimization problem, whose solution satisfies Equation \ref{eq:user_first_friedman} as before:
\[
\frac{A_t u'_i(a^*_{i,t})}{p_t}
=
\frac{1+r}{1+r_t^T}
\Rightarrow
A_t u'_i(a^*_{i,t})
=
p_t \frac{1+r}{1+r_t^T}.
\]
Setting $r_t^T = r$ allows $p_t$ to be the unique solution to $\int a^*_{i, t}((1 + r_t^T)m_{i, t-1}, p_t, \sigma_t)di = 1$, which as explained satisfies $p_t > c'(1)$ and is identical to the first best solution.\\

These two cases can be easily combined for the general case, where some users have $u_{i,t}'(1) > c'(1)$ and other $u_{i,t}'(1) \le c'(1)$, leading to the same result about $r_t^T = r$.
\end{proof}

We extend the results on the optimality of the Friedman rule to the case of preference shocks. For simplicity, we assume that the profile of utility functions and the cost function satisfy the following non-degeneracy condition: The optimal consumption vector $(a_{i,t}(\sigma))_{i\in[0,1],\sigma\in\Sigma}$ in each equilibrium for each state is different
\[
    \frac{u_{i,t}'(a_{i,t}( \sigma),\sigma)}{c'(\int a_{i,t}(\sigma)di)}
\neq \frac{u_{i,t}'(a_{i,t}( \sigma'), \sigma')}{c'(\int a_{i,t}(\sigma')di)}\text{ for }\sigma\neq \sigma'.
\]
Now the per period utility of consumption is given by
\[u_{i,t}(a, \sigma)\]
where $\sigma$ is the preference shock. The user chooses optimal $m_{t-1}$ and $a_t(\sigma)$:
\[-m_{t-1}+\beta \expect{\sigma}{u_{i,t}(a_{t}, \sigma)-p_{t}(\sigma)a_{t}+(1+r_t^T)m_{t-1}}\]
subject to the constraint that
$pa_t\leq(1+r_t^T)m_{t-1}.$ We use $\pi_\sigma$ for the probability of getting a particular shock $\sigma$.
Taking first order conditions to derive the optimal $m$ we get:
\begin{align*}
&-1+\beta\left\{1+\expect{\sigma}{r_t^T}+\pi_{\tilde{\sigma}}(u'(a_{i,t},\tilde{\sigma})-p_t{\tilde{\sigma}})
\tfrac{\partial a_{i, t}}{\partial m_{t-a}}\right\}\\
=
-1+\beta&\left\{1+\expect{\sigma}{r_t^T}+\pi_{\tilde{\sigma}}\frac{(u'(a_{i,t}, \tilde{\sigma})-p_t(\tilde{\sigma}))(1+r_r^T(\tilde{\sigma}))}{p_t(\tilde{\sigma})}\right\}=0
\end{align*}
where $\tilde{\sigma}$ is the unique state where the budget constraint is binding in the optimum (notice that the budget constraint is only binding in that state, since for all other states $\sigma' \neq \tilde{\sigma}$ the budget is not exhausted, therefore $\tfrac{\partial a_{i,t}(\sigma')}{\partial m_{t-1}} = 0$ and they disappear from the first order condition).
Rearranging
\[
\frac{u'(a_{i,t}(\tilde{\sigma},m))}{p}=1+\frac{r-\expect{\tilde{\sigma}}{r_t^T}}{\pi(\tilde{\sigma})(1+r_t^T(\tilde{\sigma}))}
\]
It is again optimal to target $\mathbb{E}_{\sigma}r_t^T = r$ and we have established the optimality of the Friedman rule.\\

\noindent
The general case, where the budget constraint is binding in multiple states follows along similar lines.

\section{Implementation for Deterministic Demand}
As we observed during the proof of \Cref{prop:friedman}, the token supply $M_t$ should shrink (or extend) based on the risk-free rate $r$ and technological progress rate $\gamma$. However, this result only tells us by how much $M_t$ should change at the steady state equilibrium, not how to actually achieve this. In an actual implementation, we need to decide which users to impact when reducing $M_t$ or which to support when increasing it. How this is done could have a detrimental effect (e.g., by reducing consumption). In this section we focus on deterministic demand, where the shock is irrelevant for all utilities $u_i$ and show that, while increasing the token supply is fairly easy and could be achieved through a variety of means, decreasing it is trickier. In fact, if we intend to do so by taxing consumption and burning the excess we should not hope to achieve an improved outcome and may actually hurt the welfare of the users.

\subsection{Increasing Token Supply}
An increase in the token supply, which would be warranted if $\gamma > r$, can be implemented in several ways. Agents in our model have quasi-linear preferences in the numéraire good and are not borrowing constrained. Thus, it has no effect on total welfare to whom additional tokens are distributed, but only a distributional effect. For overall welfare, it is therefore irrelevant (in the model) whether tokens are distributed to validators, users or other stake holders. A natural way to implement token issuance is through block rewards that validators would receive in addition to the transaction fees.

\subsection{Decreasing Token Supply through Taxing Consumption and Burning}
While increasing the token supply is easy to implement, the same cannot be said for the decreasing it. Our primary goal is not to reduce $M_t$ per se, but to increase the return on token holdings $r_t^T$. However, the two are very closely related in the steady state equilibria we consider. There are many policies that could accomplish this, such as treasury operations, transaction fee redistribution or transaction fee burning. We will focus on the latter. Given that $p_t$ is necessary to compensate the validators, we can `tax' usage by $\theta_t$ and burn the proceeds. As such, the users would need to pay $a_{i, t} (1+\theta_t) p_t$ for activity $a_{i,t}$, but only $p_t$ would go to the validators. The rest $\theta_t p_t$ would be burned, reducing the token supply accordingly. The notion of equilibrium previously introduced ca be readily adapted to the case of taxation, with the only modification that validators and users now face different prices, $p_t$ respectively $(1+\theta_t)p_t$. When studying the case where the only policy that affects token supply is fee burning, we add a constraint of the form $M_t = M_{t-1} - \theta_t p_t a_t / q_t$ to the steady state equilibrium.

As we observed during the proof of \Cref{prop:friedman}, token supply should shrink (or extend) based on the risk-free rate $r$ and technological progress rate $\gamma$. For it to be optimal to decrease the token supply, it should be the case that usage grows less than the overall economic activity, $r>\gamma$. 

We summarize the results in the following theorem. Even though the effect of taxation does not lead to improved outcomes, the actual effect varies depending on the on-chain congestion.
\begin{proposition}
    For deterministic demand, decreasing token supply through transaction fee burning cannot lead to increased welfare. In particular, for $r > \gamma$:
    \begin{itemize}
        \item For low congestion  the first best is un-attainable. Taxation has a neutral effect that balances decreased demand due to fees with token appreciation through burning.
        \item For high congestion, the first best is attainable. Taxation still has a neutral effect and does not impact validators.
    \end{itemize}
\end{proposition}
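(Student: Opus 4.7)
The plan is to derive, under the tax-and-burn policy, a single ``wedge'' equation that pins down the user--validator equilibrium, and to show that this wedge is \emph{independent} of $\theta_t$, so that taxation neither helps nor hurts welfare.

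First, I would redo the user's optimization with the tax in place. Users now face effective per-unit price $(1+\theta_t)p_t$, so the budget constraint becomes $(1+\theta_t)p_t a_{i,t}\le(1+r_t^T)m_{i,t-1}$ and hence $\partial a^*_{i,t}/\partial m_{i,t-1}=(1+r_t^T)/((1+\theta_t)p_t)$. Repeating the first-order-condition derivation from the proof of \Cref{prop:friedman} yields the modified user condition
\[\frac{A_tu'_i(a^*_{i,t})}{p_t}=(1+\theta_t)\frac{1+r}{1+r_t^T}.\]

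Second, I would use the steady-state relation together with the burn rule to eliminate $\theta_t$ from the right-hand side. In steady state $q_tM_t=(1+\gamma)q_{t-1}M_{t-1}$, so $1+r_t^T=q_t/q_{t-1}=(1+\gamma)M_{t-1}/M_t$. The aggregate budget identity $(1+\theta_t)p_ta_t=(1+r_t^T)q_{t-1}M_{t-1}=q_tM_{t-1}$ combined with the burn rule $M_t=M_{t-1}-\theta_tp_ta_t/q_t$ gives $M_t/M_{t-1}=1/(1+\theta_t)$, whence $1+r_t^T=(1+\gamma)(1+\theta_t)$. Substituting back, the wedge collapses to the $\theta_t$-independent quantity
\[\frac{A_tu'_i(a^*_{i,t})}{p_t}=\frac{1+r}{1+\gamma}>1.\]

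Third, the two regimes follow by reading off this equation together with the appropriate validator condition. In the \emph{low-congestion} case validators satisfy $c'(a_t)=p_t$, so $A_tu'_i(a^*_{i,t})/c'(a_t)=(1+r)/(1+\gamma)>1$ pins down $(a^*_{i,t},p_t)$ uniquely and independently of $\theta_t$; consumption sits strictly below the first best of \eqref{eq:first_best_deterministic} and no choice of $\theta_t$ can move it, so taxation is welfare-neutral---the extra gross payment users make is exactly offset by the higher $r_t^T$ produced by the burn. In the \emph{high-congestion} case validators are capped at $s_t=1$ and $p_t$ is pinned down instead by $\int a^*_{i,t}di=1$; because the FOC $A_tu'_i(a^*_{i,t})=p_t(1+r)/(1+\gamma)$ equates marginal utilities across users exactly as in the rationing rule of Section~\ref{sec:first_best}, the users implement the first-best allocation regardless of $\theta_t$, and validator revenue $p_t\cdot 1$ is also $\theta_t$-independent. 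The hard part will be the bookkeeping in the second step: users pay $(1+\theta_t)p_t$ while validators receive $p_t$, the burn is denominated in tokens but budgets in dollars, and the steady-state growth has to be combined with both; the clean cancellation $M_t/M_{t-1}=1/(1+\theta_t)$ is what makes the ``neutrality'' precise and carries essentially the entire content of the proposition.
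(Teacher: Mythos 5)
Your proposal is correct and follows essentially the same route as the paper: the tax-adjusted first-order condition, the steady-state-plus-burn bookkeeping yielding $1+r_t^T=(1+\gamma)(1+\theta_t)$ (equivalently the paper's $\theta_t=\tfrac{1+r_t^T}{1+\gamma}-1$), and the resulting $\theta_t$-independent wedge $\tfrac{1+r}{1+\gamma}$ read against the two validator regimes. The only minor point the paper adds that you omit is that in the congested case the wedge $(1+r)/(1+\gamma)>1$ may suppress demand enough that $\int a^*_{i,t}\,di<1$ at $p_t\ge c'(1)$, in which case the analysis falls back to the un-congested regime.
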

\begin{proof}
We first consider the case where the chain is un-congested in the first best \ref{eq:first_best_deterministic}:
\[\frac{A_t u'_i(1)}{c'(1)}<1\text{ for each }i\in[0,1].\]
As previously observed, there exist optimal $a^*_i$ such that
\[\frac{\int A_t u_i'(a^*_{i,t})di}{c'(\int a^*_{i,t}di)}=1.\]

For a given tax rate $\theta_t$ the optimization problem faced by user $i$ changes compared to \ref{eq:user_opt_mit}. From that user's perspective, the fee becomes $(1 + \theta_t) p_t$:
\begin{equation*}
\max_{\{m_{i,t}\}_{t=0,1,\ldots}}
    \sum_{t=1}^{\infty}\beta^t 
    \expect{\sigma_t}{
        \left(A_t u_{i,t}(a_{i, t}^*)+(1+r_t^T)m_{i, t-1}-(1 + \theta_t)p_t a_{i, t}^*\right)-m_{i, t - 1} / \beta
    },
\end{equation*}
Following the same calculations as before, we can derive a first order condition for consumption
\begin{equation}\label{eq:first_nocongest_determininstic}
    \frac{\int A_t u'_i(a^*_{i,t})di}{(1+\theta_t)c'(a^*_t)}=\frac{1+r}{1+r_t^T}=1+i_t.
\end{equation}
Since we are in the un-congested case (and clearly the increased fees due to taxation can only reduce consumption), we still have that $a^*_t < 1$ and thus $p_t = c'(a^*_t)$ at equilibrium. 

Suppose all tax revenue is burned in each period:
\begin{align}
M_t = M_{t-1}-\theta_tp_ta_t/q_{t} &\Rightarrow
q_t M_t = q_t M_{t-1} - \theta_t p_t a_t\notag\\
&\Rightarrow m_t = \left(1 + \frac{q_{t} - q_{t-1}}{q_{t-1}}\right) q_{t-1} M_{t-1} - \theta_t p_t a_t\notag\\
&\Rightarrow m_t = (1+r_t^T)m_{t-1}-\theta_tp_ta_t\label{eq:txburn}
\end{align}
where $r_t^T=\tfrac{q_t-q_{t-1}}{q_{t-1}}$, $m_t=q_tM_t$ and $m_{t-1}=q_{t-1}M_{t-1}$.
Focusing again on steady state equilibria we have
\[
    m_t=q_tM_t=q_{t-1}M_{t-1}=(1+\gamma)m_{t-1},
\]
which we can combine with Equation~\ref{eq:txburn} to obtain:
\begin{equation}\label{eq:yield_burn}
    (r_t^T - \gamma)m_{t-1}=\theta_tp_ta_t.
\end{equation}
Using the fact that users spend all their tokens in each period,  $m_{i,t-1}=\tfrac{(1+\theta_t)p_ta_{i,t}}{1+r_t^T}$ and integrating over all users to get $m_{t-1}=\tfrac{(1+\theta_t)p_ta_{t}}{1+r_t^T}$, we can substitute:
\[
(r_t^T - \gamma)\frac{(1+\theta_t)p_ta_{t}}{1+r_t^T}
=
\theta_tp_ta_t
\Rightarrow
\frac{r_t^T - \gamma}{1+r_r^T}
=
\frac{\theta_t}{1+\theta_t}
\Rightarrow
\theta_t
=
\frac{1 + r_t^T}{1 + \gamma} - 1.
\]
Plugging this back into Equation~\ref{eq:first_nocongest_determininstic}, the first order condition becomes
\begin{equation}\label{eq:det_demand_final}
    \frac{\int u'_i(a^*_{i,t})di}{(1+\theta_t)c'(a^*_t)}=\frac{1+r}{1+r_t^T} \Rightarrow
    \frac{\int u'_i(a^*_{i,t})di}{c'(a^*_t)} = \frac{1 + r}{1 + \gamma}.
\end{equation}
Therefore, taxation has a neutral effect, where the positive effect of token appreciation through token burning and the negative effect of demand reduction through consumption taxation exactly cancel out. Note that the outcome might be worse than the optimal, since $(1 + r)/(1 + \gamma) > 1$ reduces $a^*_{t}$ compared to the optimal and pushes it further into the region where the marginal of the utility exceeds the marginal of the cost.

In the case, of a congested chain in the first best
\[\frac{A_t u_i'(1)}{c'(1)}>1\text{ for }i\in[0,1],\]
consumption taxation still has an overall neutral effect on welfare, but its effect is qualitatively different. Taking the first order condition for users we obtain a similar result as Equation~\ref{eq:det_demand_final}, but $p_t$ is not necessarily equal to $c'(a^*_t)$ due the congestion. In addition, we do not integrate over all users, but consider the activity $a^*_{i, t}$ of an individual user to obtain:
\begin{equation}
    \frac{u'_i(a^*_{i,t})}{(1+\theta_t)p_t}=\frac{1+r}{1+r_t^T}
    \Rightarrow
    u'_i(a^*_{i,t}) = p_t \frac{1 + r}{1 + \gamma}.
\end{equation}
As before, notice that the choice of $\theta_t$ has a neutral effect on user activity, which is fully defined by the choice of $p_t.$ To limit congestion and reach an equilibrium, $p_t$ has to be set such that either 
\[
    \int a^*_{i, t} di = 1 \text{ and } p_t \ge c'(1)
\]
or
\[
    \int a^*_{i, t} di < 1 \text{ and } p_t = c'\left(\int a^*_{i, t} di\right),
\]
and only one of the two is possible. So even though the chain is congested in the first best, given that $(1+r)/(1+\gamma) > 1$ we may be unable to set the price $p_t$ high enough to satisfy both $\int a^*_{i, t} di = 1$ and  $p_t \ge c'(1)$ for the first condition. If this is the case, then the chain is un-congested and we fall back to the previous case. If not, observe that that there exists a unique $p_t$ (independent of $\theta_t$) that leads to the optimal outcome. Therefore, $\theta_t > 0$ has a neutral effect for any congestion level.
\end{proof}
\begin{remark}
In the proof of this result, notice that the choice of $\theta_t$ did \emph{not} impact the validator payments (in dollars) in the congested case! Even though the fee $(1+\theta_t) p_t$ may aim to control congestion through a higher tax that is burned (thus intuitively suggesting that the validators are paid less), the increased yield $1 + r_t^T$ makes this effect neutral: the users have to purchase fewer tokens that appreciate before they have to spend them, but whatever remains after collecting the tax also appreciates equally after the validators collect it. In addition, notice that with high enough congestion there may be a price $p_t$ leading to the optimal outcome where the entire blockspace is used, even though there are carry costs involved. This cannot be achieved for deterministic demand in the un-congested setting (assuming $r > \gamma$).
\end{remark}

\section{Uncertain demand and shocks}
So far, it may seem that there is no hope to obtain a better outcome by increasing the return on token holdings $r_t^T$ through taxation and burning. However, this is not always the case. The previously studied cases had in common that the users had no uncertainty about how many tokens should be optimally held. Users were `uniformly' affected by our particular policy. To have a positive effect, we need to study a situation where users are \emph{uncertain} about how many tokens to purchase and may not be able to spend all of them in every state of the world. Precisely this can be modelled through random \emph{shocks} and increased congestion. Combining both, the users that were positively affected by the shock may spend all the coins and potentially `crowd-out' the rest who have leftover tokens that appreciate in value. This averages out (to an extent) the effects of the shocks and due to the concavity of the utilities and convexity of costs there is indeed a tangible improvement in the outcome, provided taxation $\theta_t$ is not too high. Taxation in some states of the world can increase utility in other states of the world by decreasing the carry cost of token holdings and hence increasing consumption closer to the optimal level.

As we will see in the following, taxation can be optimal if agents are heterogeneously affected by preference shocks \emph{and} a positive preference shock for a subgroup of agents leads to congestion. On the other hand, if agents are homogeneous, or if preference shocks don't lead to congestion, taxation does not increase welfare.

\paragraph*{Binary Shocks}
We consider the case where $\sigma_t \in \{0, 1\}$ is binary and $\prob{\sigma_t = 1}$ is known to the users. In general, $\sigma_t = 1$ will be referred to as the high demand (or positive) shock and the utilities will be higher, while $\sigma_t = 1$ will result in lower demand.

\subsection{Homogeneous agents ex-ante}
We first begin by considering homogeneous agents, to check if there are benefits to taxation stemming from `distributing' the gains of individual agents from their utility from the high demand shock state to the other.

Let us consider the simple where all agents have the same utility $u$ in case of a positive shock and $0$ otherwise.
We consider two cases. First, the one where shocks are independent for different users. Afterwards, the one where all users receive the same shock.

We denote by $\bar{r}_t^T$ the return on token holding in case of a positive shock $\sigma_t=1$  and by $\underline{r}_t^T$ the return on token holding in case of a no shock $\sigma_t=0$.  We denote $\bar{a}_{i, t}(m,p):=a^*_{i, t}(m, p, 1)$ and $\underline{a}_{i, t}(m, p):=a^*_{i, t}(m, p, 1)$, following the convention in \ref{eq:user_opt_activity}.

\subsubsection{Users Receive Independent Shocks}
In the first case, we consider users that receive independent and identically distributed binary shocks (this slightly abuses the notation, since shocks $\sigma_t$ are not defined per user). Let $\rho = \prob{\sigma_t = 1}$ be the fraction of users that receive a positive shock. Even though setting $\theta_t > 0$ still is at best neutral, the effect is different than the deterministic case.

\begin{proposition}\label{prop:iid_shocks}
    For binary shocks $\sigma_{t} \in \{0, 1\}$, received independently and identically distributed by every user, burning $\theta_t$ of the fees cannot lead to increased welfare.
\end{proposition}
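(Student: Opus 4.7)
The plan is to exploit that with a unit mass of agents receiving i.i.d. binary shocks, the law of large numbers makes \emph{aggregate} activity deterministic even though each individual user faces uncertainty. By symmetry, every user picks the same ex ante token holding $m_{t-1}$ and the same consumption level $a$ to execute if $\sigma=1$, so $a_t=\rho a$ and aggregate tokens equal per-user tokens. This puts the situation very close to the deterministic case already analysed, and I expect the conclusion to mirror it: burning redistributes but does not create welfare.

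First I would derive the user's first-order condition for $m_{t-1}$ under the tax-inclusive fee $(1+\theta_t)p_t$. Because the budget only binds in the state $\sigma=1$ (probability $\rho$), a derivation analogous to Equation~\ref{eq:user_first_friedman} weighted by $\rho$ yields
\[\frac{u'(a)}{(1+\theta_t)p_t}\;=\;1+\frac{r-r_t^T}{\rho(1+r_t^T)},\]
where the factor $1/\rho$ captures the insurance cost of holding tokens that may go unused if $\sigma=0$.

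Next I would combine the burning identity $m_t=(1+r_t^T)m_{t-1}-\theta_tp_ta_t$ (derived as in \eqref{eq:txburn}) with the steady-state condition $m_t=(1+\gamma)m_{t-1}$ and the per-user binding-budget relation $m_{t-1}=(1+\theta_t)p_ta/(1+r_t^T)$, using $a_t=\rho a$, to solve
\[r_t^T\;=\;\frac{\gamma+\theta_t(\gamma+\rho)}{1+\theta_t(1-\rho)},\]
which collapses to $r_t^T=\gamma$ at $\theta_t=0$. Substituting back into the FOC together with validator optimality $p_t=c'(\rho a)$ in the un-congested regime and simplifying should give
\[\frac{u'(a)}{c'(\rho a)}\;=\;1+\frac{r-\gamma}{\rho(1+\gamma)}+\theta_t\cdot\frac{(r-\gamma)(1-\rho)}{\rho(1+\gamma)}.\]
Burning is only contemplated when $r>\gamma$; in that regime the ratio already exceeds the first-best value $1$ at $\theta_t=0$, and the coefficient on $\theta_t$ is strictly positive, so raising $\theta_t$ pushes $u'/c'$ further above $1$ and, by concavity of $u$ and convexity of $c$, strictly lowers equilibrium $a$ below its no-tax level, hurting welfare. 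In the congested regime, the first-best fixes the allocation at the capacity $\rho a=1$; the capacity constraint pins down the real allocation and $\theta_t$ merely rebalances nominal payments between users, validators, and the burn, leaving welfare unchanged. Either way, welfare weakly decreases.

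The main obstacle is bookkeeping: combining the three identities (user budget, burning, steady state) and keeping per-user versus aggregate quantities straight during the substitution that produces the clean expression above. A subtler conceptual point is justifying that aggregate activity is genuinely deterministic under a continuum of i.i.d. individual shocks, which relies on the non-atomicity footnote from the model section; without it, the steady-state definition would need to be state-contingent and the reduction to the deterministic analysis would not factor so cleanly.
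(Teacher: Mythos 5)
Your proposal is correct and follows essentially the same route as the paper: the same first-order condition $\frac{u'(\bar a)}{(1+\theta_t)p_t}=1+\frac{i_t}{\rho}$, the same combination of the burning identity with the steady-state and binding-budget relations to get $1+r_t^T=\frac{(1+\gamma)(1+\theta_t)}{1+(1-\rho)\theta_t}$, and the same monotonicity-in-$\theta_t$ argument split into un-congested and congested cases. Your final simplification $\frac{u'(a)}{c'(\rho a)}=1+\frac{(r-\gamma)\left(1+(1-\rho)\theta_t\right)}{\rho(1+\gamma)}$ is in fact the algebraically correct form (the paper's displayed final line omits the factor $1+\gamma$ in the denominator), and the qualitative conclusion is unaffected.
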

\begin{proof}
Notice that on aggregate (due to having a continuum of users) the activity $a_t$, price $p_t$, token holdings $m_t$ and returns on token holdings $r_t^T$ are not random.  After the shocks realize, a fraction $\rho$ of users will choose $\bar{a}_{t}((1+r_t^T)m,p)\geq 0$ whereas a fraction $1-\rho$ will choose $\underline{a}_{i,t}=0$ (since their utility for $\sigma_t = 0$ is 0).
Since agents are the same ex-ante (which is when they decide their token holdings), the maximization they face is the same for all and it is the following (using $\bar{a}$ and omitting the arguments $(1+r_t^T) m_{t-1}$ and $(1+\theta_t) p_t$):
\begin{align*}
&\max_{m_{t-1}}
    \left\{
        -m_{t-1} + \beta \expect{\sigma}{u(a^*_t)+(1+r_t^T)m_{t-1} - (1 +\theta_t)p_t a^*_t}
    \right\}\\
=&\max_{m_{t-1}}
    \left\{
        -m_{t-1} + \beta\left(\rho (u(\bar{a}_t)+(1+r_t^T)m_{t-1}-(1+\theta_t)p_t\bar{a}_t)+(1-\rho)(1+r_t^T)m_{t-1}\right)
    \right\}\\
=&\max_{m_{t-1}}
    \left\{
        -m_{t-1}+\beta\left(\rho (u(\bar{a}_t)-(1+\theta_t)p_t\bar{a}_t+(1+r_t^T)m_{t-1}\right)
    \right\}
\end{align*}
Taking first order conditions:
\[
\beta\left(\rho (u'(\bar{a}_t)-(1+\theta_t)p_t)\tfrac{\partial \bar{a}_t}{\partial m_{t-1}}+1+r_t^T\right)=1
\]
If the return on token holding is smaller than risk free return, then the agent will optimally choose $m$ such that all of it is spent in the high state:
$(1+r_t^T)m_{t-1}=(1+\theta_t)p_t\bar{a}_t$. 
Therefore: $\frac{\partial \bar{a}_t}{\partial m_{t-1}}=\frac{1+r_t^T}{(1+\theta_t)p_t}$ and:
\[
    \beta(\rho(u'(\bar{a})-(1+\theta_t)p_t)\tfrac{\partial \bar{a}}{\partial m_{t-1}}+1+r_t^T)
    =
    \beta(1+r_t^T)\left(1+\rho\left(\frac{u'(\bar{a})}{(1+\theta_t)p_t}-1\right)\right)=1.
\]
Given that $\beta=\tfrac{1}{1+r}$ and defining
\[1+i_t:=\frac{1+r}{1+r_t^T}\] we therefore get
\[\frac{u'(\bar{a})}{(1+\theta_t)p_t}=1+\frac{i_t}{\rho}.\]

The effect of taxation is different than in the deterministic case. \Cref{eq:yield_burn} still applies for this case and gives $(r_t^T - \gamma)=\theta_tp_t\rho \bar{a}((1+r_t^T)m_{t-1}, p)/m_{t-1}$. Combining this with the condition that all money is spent in the high state, we get
\begin{align*}
1 + r_t^T & =1 + \gamma + \theta_tp_t\rho \bar{a}((1+r_t^T)m_{t-1}, p)/m_{t-1}\\
&= 1 + \gamma + \rho \frac{\theta_t (1+ r_t^T)}{1 + \theta_t},
\end{align*}
and finally solving for $1 + r_t^T$:
\[
    1+r_t^T=\frac{(1 + \gamma)(1+\theta_t)}{1+(1-\rho)\theta_t}
\]
Combining with the first order condition:
\begin{align*}
\frac{u'(\bar{a}_t)}{p_t}
&=
(1+\theta_t)\left(1 + \frac{1 + i_t - 1}{\rho}\right)\\
&=
(1+\theta_t)\left(1 + \frac{\frac{1 + r}{1 + r_t^T} - 1}{\rho}\right)\\
&=
(1+\theta_t)\left(
1 + 
\frac{1}{\rho}\cdot
\frac{1 + r - (1+\gamma)(1+\theta_t)+(1+r)(1-\rho)\theta_t}{(1+\gamma)(1+\theta_t)}\right)\\
&=
1+\frac{(r-\gamma)(1+(1-\rho)\theta_t)}{\rho}.
\end{align*}
Notice that the right hand side is greater than or equal to 1 and \emph{increasing} in $\theta_t$. As always, there are two options depending on congestion.

In the un-congested state, we have that $p_t = c'(\rho \bar{a}_t)$ and the right hand side should be equal to 1 in an optimal outcome. Thus we need to set taxes $\theta_t$ such that the distortion is minimal, and the right hand side as close to $1$ as possible, which is achieved for $\theta_t = 0.$ In a congested state, we just need that the total activity $\bar{\rho a} = 1$. This may be achievable for many $p_t, \theta_t$ combinations (unlike the deterministic case!). However, there is no improvement for $\theta_t > 0$, only the risk that the price $p_t$ would have to be set so low that the validators might not process the entire activity $\rho \bar{a}$.
\end{proof}
\begin{remark}
    Contrary to the deterministic case, there may be multiple choices of $p_t$ and $\theta_t$ leading to an optimal outcome in the congested setting (instead of just one choice of $p_t$ and irrelevance of $\theta_t$). The difference lies in the shock: the users who receive the low shock do not use their tokens and benefit more from the fee burn, whereas in the deterministic case all non-burned tokens would be transferred to validators.
\end{remark}

\subsubsection{All Users Receive the Same Shock}
Typically, the shocks received by each user would be correlated to each other and perhaps guide a greater market trend, given the new information presented. To model this, we consider the case where all agents receive the same shock with probability $\rho$ for the the positive shock. Compared to \Cref{prop:iid_shocks}, the return on token holdings $r_t^T$ is not deterministic anymore, since the aggregate demand is not deterministic anymore. As before, we consider identical users that have utility $u$ for the positive shock and $0$ otherwise.

\begin{proposition}
    For binary shocks $\sigma_t \in \{0, 1\}$, where each user receives the same shock, burning $\theta_t$ of the fees cannot lead to increased welfare.
\end{proposition}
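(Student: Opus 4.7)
The plan is to adapt the argument of \Cref{prop:iid_shocks}, but now track the fact that both the token return $r_t^T$ and the token supply are state-dependent, since burning only happens when $\sigma_t = 1$. First I would set up the representative user's problem. Since all agents are homogeneous and receive the identical shock, in the $\sigma_t = 0$ state no one consumes and every token holder simply carries the balance to the dollar market; in the $\sigma_t = 1$ state everyone chooses $\bar{a}_t$ with the budget $(1+\bar{r}_t^T)m_{t-1} = (1+\theta_t)\bar{p}_t \bar{a}_t$ binding (the usual argument: the return on the alternative asset is at least $r_t^T$, so it is strictly suboptimal to carry unspent tokens out of the consumption opportunity).

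The first step is to take the first order condition for $m_{t-1}$ exactly as in the proof of \Cref{prop:iid_shocks}, but keeping $\bar{r}_t^T$ and $\underline{r}_t^T$ separate. Using $\partial \bar{a}_t / \partial m_{t-1} = (1+\bar{r}_t^T)/((1+\theta_t)\bar{p}_t)$ from the binding budget and $\partial \underline{a}_t / \partial m_{t-1} = 0$, the FOC should reduce to
\[
\rho\,\frac{u'(\bar{a}_t)(1+\bar{r}_t^T)}{(1+\theta_t)\bar{p}_t} + (1-\rho)(1+\underline{r}_t^T) = \tfrac{1}{\beta} = 1+r.
\]
Next I would pin down the two returns using the steady state condition $m_t = (1+\gamma)m_{t-1}$ applied state-by-state. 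In the low state no tokens are burned, so $\underline{M}_t = M_{t-1}$ and the steady state forces $\underline{r}_t^T = \gamma$. In the high state, burning gives $\bar{M}_t = M_{t-1} - \theta_t \bar{p}_t \bar{a}_t / \bar{q}_t$, and combining the steady state with the binding budget (exactly as in the derivation leading to \Cref{eq:yield_burn}) should yield $1+\bar{r}_t^T = (1+\gamma)(1+\theta_t)$.

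Plugging these two expressions into the FOC is the key algebraic step: the factor $(1+\bar{r}_t^T)/(1+\theta_t)$ collapses to $1+\gamma$, and $\theta_t$ disappears from the user's marginal condition, leaving
\[
\frac{u'(\bar{a}_t)}{\bar{p}_t} = \frac{(1+r) - (1-\rho)(1+\gamma)}{\rho(1+\gamma)}.
\]
This right-hand side equals $1$ exactly at $r = \gamma$ (the Friedman benchmark) and exceeds $1$ for $r > \gamma$, independent of $\theta_t$. To finish, I would handle the two congestion regimes as in the deterministic proof: in the uncongested case $\bar{p}_t = c'(\bar{a}_t)$ and the distortion relative to first best is fixed by $r, \gamma, \rho$ alone, so no $\theta_t > 0$ can close it; in the congested case the equilibrium requires $\bar{a}_t = 1$, and the expression above uniquely determines $\bar{p}_t$, so $\theta_t$ is again neutral (and only threatens to drive $\bar{p}_t$ below $c'(1)$, preventing validators from processing the demand).

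The main obstacle I anticipate is the bookkeeping around state-dependent $q_t$ and $M_t$: one must be careful that the steady state condition $m_t = (1+\gamma)m_{t-1}$ is applied in each realized state separately, and that the derivative $\partial \bar{a}_t/\partial m_{t-1}$ is taken only in the binding state. Once these are set up cleanly, the cancellation $(1+\bar{r}_t^T) = (1+\gamma)(1+\theta_t)$ drives the neutrality result essentially for free.
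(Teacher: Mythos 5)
Your proposal is correct and follows essentially the same route as the paper: a binding budget constraint in the positive-shock state, the relation $1+\bar{r}_t^T=(1+\theta_t)(1+\gamma)$ obtained by combining the burn equation with the steady state, and the observation that $\theta_t$ cancels out of the resulting first order condition, leaving a marginal condition pinned down by $r$, $\gamma$, $\rho$ alone. The one deviation is that you derive $\underline{r}_t^T=\gamma$ by applying the steady-state condition state-by-state in the no-burn state, whereas the paper simply posits zero return in the bad state; this shifts the final constant when $\gamma\neq 0$ (the paper gets $\tfrac{\rho+r}{(1+\gamma)\rho}$, you get $\tfrac{1+r-(1-\rho)(1+\gamma)}{\rho(1+\gamma)}$) but leaves the neutrality conclusion unchanged.
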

\begin{proof}
Given the two states of the world (based on the shock), we define the corresponding variables $\bar{p}_t$, $\bar{\theta}_t$ and $\bar{r}_t^T$. Conveniently, given that for the negative shock $u_i = 0$, we have that the corresponding variables are $\bar{p}_t = \bar{\theta}_t =\bar{r}_t^T = 0$ as well.

As before, if the return on token holding is smaller than the risk free rate, then the agent will optimally choose $m$ such that all of it is spent in the good state: $(1+\bar{r}_t^T)m_{t+1}=(1+\bar{\theta}_t)\bar{p}_t\bar{a}_t(m_{t-1}, \bar{p}_t)$. Combining this with the condition from \Cref{eq:yield_burn} that $(\bar{r}_t^T - \gamma) m_{t-1}=\bar{\theta}_t \bar{p}_t \bar{a}_t$, we get
\begin{align*}
    1 + \bar{r}_t^T 
    = 
    (1+\bar{\theta}_t)\bar{p}_t\bar{a}_t / m_{t-1}
    =
    (1+\bar{\theta}_t)\frac{\bar{r}_t^T - \gamma}{\bar{\theta}_t}
    =
    (1 + \bar{\theta}_t)(1 + \gamma)
\end{align*}
By construction, we have no consumption and zero return in the bad state so that
\[
    \expect{\sigma}{1 + r_t^T}=(1 - \rho) + \rho(1 + \bar{\theta}_t)(1 + \gamma).
\]
Thus the user solves
\begin{align*}
&\max_{m_{t-1}}\left\{
-m_{t-1} + \beta \expect{\sigma_t}{u(a^*_t) - (1+\theta_t)p_ta^*_t+(1+r_t^T)m_{t-1}}\right\}\\
=&
\max_{m_{t-1}} \left\{
-m + \beta(\rho (u(\bar{a}_t)-(1+\bar{\theta}_t)\bar{p}_t \bar{a}_t)+
((1 - \rho) + \rho(1 + \bar{\theta}_t)(1 + \gamma))m_{t-1})
\right\}
\end{align*}
Taking first order conditions:
$$
\rho (u'(\bar{a}_t)-(1+\bar{\theta}_t)\bar{p}_t)\tfrac{\partial \bar{a}_t}{\partial m_{t-1}} + 
(1 - \rho) + \rho(1 + \bar{\theta}_t)(1 + \gamma)
= 
1 / \beta
$$
Since all of $m_{t-1}$ is spent on the positive shock state (i.e. $(1+\bar{r}_t^T)m_{t-1}=(1+\bar{\theta}_t)\bar{p}_t\bar{a}_t$) we have that $\frac{\partial \bar{a}_t}{\partial m_{t-1}}=\frac{1+\bar{r}_t^T}{(1+\bar{\theta}_t)\bar{p}_t}$ and therefore:
\begin{align*}
&\rho (u'(\bar{a}_t)-(1+\bar{\theta}_t)\bar{p}_t)\tfrac{\partial \bar{a}_t}{\partial m_{t-1}} + 
(1 - \rho) + \rho(1 + \bar{\theta}_t)(1 + \gamma)
= 
1 / \beta\\
\qquad\Rightarrow&
(u'(\bar{a}_t)-(1+\bar{\theta}_t)\bar{p}_t))\tfrac{\partial \bar{a}_t}{\partial m_{t-1}}
= 
\frac{1 + r - (1 - \rho) - \rho(1 + \bar{\theta}_t)(1 + \gamma)}{\rho}\\
\qquad\Rightarrow&
u'(\bar{a}_t)\frac{(1+\bar{\theta}_t)(1+\gamma)}{(1 + \bar{\theta}_t) \bar{p}_t}
=
\frac{1 + r - (1 - \rho) - \rho(1 + \bar{\theta}_t)(1 + \gamma)}{\rho}
+
(1+\bar{\theta}_t)(1+\gamma)\\
\qquad\Rightarrow&
\frac{u'(\bar{a}_t)}{\bar{p}_t}
=
\frac{1}{1+\gamma}\frac{1 + r - (1 - \rho) - \rho(1 + \bar{\theta}_t)(1 + \gamma)}{\rho}
+
1 + \bar{\theta}_t\\
\qquad\Rightarrow&
\frac{u'(\bar{a}_t)}{\bar{p}_t}
=
\frac{\rho + r}{(1+\gamma)\rho}. 
\end{align*}
In conclusion, the choice of $\theta_t$ does not matter (either with or without congestion), as long as $\expect{\sigma}{r_t^T} \leq r_t.$
\end{proof}
\begin{remark}
    It seems that either for independent or identical shocks, it is not possible to relieve the carry cost (thus increasing token holdings and possible welfare) in general, by taxing consumption. It is unclear if this extends for homogeneous ex-ante users that have a non-zero utility in the low-congestion state, but it seems unlikely, since the token holdings would be determined by the positive shock state for all users.
\end{remark}

\subsection{Heterogeneous agents}
We modify the previous examples, assuming that the agents are not homogeneous. In particular, there are two utility functions $\bar{u}$ and $\underline{u}$ (intuitively, $\bar{u}$ leads to higher demand than $\underline{u}$). There are two types of agents: a fraction $\lambda$ of agents  is of Type A and has utility $\bar{u}$ in case of a positive shock and $\underline{u}$ otherwise, while a fraction $1-\lambda$ is of Type B and has utility $\underline{u}$ in both states. Moreover, we assume that
\[
\frac{\bar{u}'(1 / \lambda)}{c'(1)} > 1 > \frac{\underline{u}'(1)}{c'(1)}.
\]
In the positive shock state, there is insufficient blockspace to optimally satisfy all users. In particular, the users with $\bar{u}$ alone are happy to consume the entirety of the blockspace, thus potentially crowding out the rest.

\begin{proposition}
    For binary shocks $\sigma_t \in \{0, 1\}$, heterogeneous users and high congestion in the positive shock state, there is a consumption taxation policy that leads to improved social welfare.
\end{proposition}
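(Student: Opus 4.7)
The plan is to exhibit a one-parameter family of tax policies, indexed by $\bar{\theta}_t \geq 0$ (a tax levied only in the positive shock state, with $\underline{\theta}_t = 0$), and to show that the derivative of total social welfare with respect to $\bar{\theta}_t$ evaluated at $\bar{\theta}_t = 0$ is strictly positive. This will imply that any sufficiently small $\bar{\theta}_t > 0$ delivers strictly higher welfare than the no-tax benchmark, establishing the existence claim.

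First, I would characterize the no-tax equilibrium. By the hypothesis $\bar{u}'(1/\lambda) > c'(1)$, the positive shock state is congested in equilibrium, so $\lambda \bar{a}_{t,A} + (1-\lambda)\bar{a}_{t,B} = 1$ with a rationing price $\bar{p}_t > c'(1)$. Because both types face the same effective price, their individual first-order conditions equalize marginal utilities, $\bar{u}'(\bar{a}_{t,A}) = \underline{u}'(\bar{a}_{t,B})$, which coincides with the first-best rationing condition \emph{within} the positive state. In the negative shock state, $\underline{u}'(1) < c'(1)$ yields an uncongested equilibrium with $\underline{p}_t = c'(\underline{a}_t)$, but the analysis of \Cref{prop:friedman} forces a strict carry-cost wedge $(1+r)/(1+\underline{r}_t^T) > 1$, so $\underline{a}_t$ is strictly below its first-best level. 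Type A agents are budget-binding in the positive state and Type B agents in the negative state, which pins down each type's token holdings.

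Second, I would perturb by $\bar{\theta}_t > 0$. The burn identity analogous to \Cref{eq:yield_burn} takes the form $(\mathbb{E}_\sigma[r_t^T] - \gamma) m_{t-1} = \rho\,\bar{\theta}_t\,\bar{p}_t\,\bar{a}_t$, so that turning on $\bar{\theta}_t$ strictly raises $\mathbb{E}_\sigma[r_t^T]$ and hence strictly shrinks the carry-cost wedge in the low state. In the high state the allocation is pinned down by aggregate feasibility $\lambda \bar{a}_{t,A} + (1-\lambda)\bar{a}_{t,B} = 1$ together with marginal-utility equalization, neither of which depends on $\bar{\theta}_t$ directly; only the effective price $(1+\bar{\theta}_t)\bar{p}_t$ adjusts to clear the market. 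Since validators operate at capacity in this state, the cost $c(1)$ is constant in $\bar{\theta}_t$, and therefore the welfare contribution of the high state remains first-best and is first-order insensitive to $\bar{\theta}_t$ at zero. In the low state, by the envelope theorem the derivative of welfare equals $(\underline{u}'(\underline{a}_t) - c'(\underline{a}_t))\cdot \partial \underline{a}_t/\partial \bar{\theta}_t$; the first factor is strictly positive by the wedge, and the second factor is strictly positive because a higher $\underline{r}_t^T$ relaxes the Type B first-order condition. Summing the two contributions gives a strictly positive total welfare derivative.

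The hard part will be cleanly tracking the general-equilibrium feedback between the burn identity, the endogenous price $\bar{p}_t$, the state-dependent returns $\bar{r}_t^T,\underline{r}_t^T$, and the token-demand decisions of both types; in particular, I must verify that Type A's response to the tax — driven by their binding positive-state budget constraint — does not offset the effect of the burn on aggregate $M_t$. A secondary technical nuisance is that the set of binding constraints differs across types and states, so the envelope argument has to be applied type-by-type before aggregating. Once this bookkeeping is in place, the sign calculation of the welfare derivative is routine, and the proposition follows.
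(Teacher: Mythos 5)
Your overall strategy (perturb $\bar{\theta}_t$ away from zero and sign the welfare derivative, with the low state supplying the gain) is close in spirit to the paper's comparison of a taxed equilibrium against the no-tax benchmark, and your low-state channel is exactly the one the paper uses: the burn in the congested state raises the expected return on tokens, which shrinks the carry-cost wedge in Type B's first-order condition (\Cref{eq:good_tax2}) and raises $\underline{b}_t$ toward its first-best level. (Minor slip: the relevant return is $\bar{r}_t^T$ entering through the expectation $\rho\bar{r}_t^T$, not $\underline{r}_t^T$, which stays at zero since $\underline{\theta}_t=0$.)

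The genuine gap is in your treatment of the congested state. You assert that, because both types face the same effective price, their marginal utilities are equalized, $\bar{u}'(\bar{a}_t)=\underline{u}'(\bar{b}_t)$, so the high-state allocation is within-state first-best and hence first-order insensitive to $\bar{\theta}_t$. This contradicts your own (correct) observation that Type A's budget binds in the positive state while Type B's binds in the negative state: the binding intertemporal constraint injects the carry-cost wedge $\tfrac{1+r/\rho}{1+\bar{r}_t^T}>1$ into Type A's high-state condition (\Cref{eq:good_tax1}), whereas Type B's high-state consumption is unconstrained and satisfies $\underline{u}'(\bar{b}_t)=(1+\bar{\theta}_t)\bar{p}_t$. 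So at $\bar{\theta}_t=0$ one has $\bar{u}'(\tilde{a}_t)>\underline{u}'(\tilde{b}_t)$: the no-tax congested allocation is \emph{not} first-best, the envelope theorem does not apply there, and the allocation \emph{does} move with $\bar{\theta}_t$ (the wedge depends on $\bar{r}_t^T$, which depends on $\bar{\theta}_t$). You therefore cannot dismiss the high-state term as zero; you must sign the reallocation from Type B to Type A. The paper does this with \Cref{lemma:poor_mans_convex}, showing the shift toward the type with higher marginal utility raises $\lambda\bar{u}(\bar{a}_t)+(1-\lambda)\underline{u}(\bar{b}_t)$, so the high state contributes a second, strictly positive welfare gain rather than a null term. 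Your conclusion survives only because this omitted term happens to be positive; as written, your argument would not detect it if it were negative, so the proof is incomplete without an analogue of that lemma or an explicit sign computation for the congested-state reallocation.
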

\begin{proof}
Given the two different user types, we need to define a few more symbols. Specifically we have use $\bar{x}$ and $\underline{x}$ to indicate that this particular variable $x$ is conditioned on the positive or negative shock. Specifically:
\begin{itemize}
    \item Type A users: $\bar{a}_t$ and $\underline{a}_t$ for the activity in the high and low congestion shock states and $\bar{m_t}$ for token holdings.
    \item Type B users: $\bar{b}_t$ and $\underline{b}_t$ for the activity and $\underline{m}$ for token holdings.
    \item Prices $\bar{p}_t, \underline{p}_t$, taxes $\bar{\theta}_t, \underline{\theta_t}$ and finally returns $\bar{r}_t^T, \underline{r}_t^T$. 
\end{itemize}
For simplicity, we assume that there is no technological progress and $\gamma = 0$, although the results generalize. In addition, we do not tax in the low congestion state (i.e., $\underline{\theta}_t = 0$ and thus $\underline{r}_t^T = 0$.

For the Type A user, we have the following optimization:
\begin{align*}
&\max_{\bar{m}_{t-1}} \left\{
- \bar{m}_{t-1} + \beta(\expect{\sigma_t}{(1 + r_t^T)\bar{m}_{t-1} + u(a^*_t) - (1+\theta_t) p_t a^*_t)}
\right\}\\
=
&\max_{\bar{m}_{t-1}} \{- \bar{m}_{t-1} +\\
&\qquad\beta((1 +\expect{\sigma_t}{r_t^T})\bar{m}_{t-1} +
(1-\rho)(\underline{u}(\underline{a}_t) - \underline{p}_t \underline{a}_t)
+ \rho(\bar{u}(\bar{a}_t) - (1 + \bar{\theta}_t) \bar{p}_t \bar{a}_t))
\}
\end{align*}
leading to the following first order condition:
$$
- 1 + \beta\left((1 + \expect{\sigma_t}{r_t^T}) + 
(1-\rho)(\underline{u}'(\underline{a}_t) - \underline{p}_t)\tfrac{\partial \underline{a}_t}{\partial \bar{m}_{t-1}}
+ \rho(\bar{u}'(\bar{a}_t) - (1 + \bar{\theta}_t) \bar{p}_t) \tfrac{\partial \bar{a}_t}{\partial \bar{m}_{t-1}} \right) = 0
$$
Since the Type A user only spends all tokens $\bar{m}_{t-1}$ in the high congestion shock, we have that $\tfrac{\partial \underline{a}_t}{\partial \bar{m}_{t-1}} = 0$. In addition, $(1 + \bar{r}_t^T)\bar{m}_t = (1 + \bar{\theta}_t)\bar{p}_t\bar{a}_t$, leading to $\tfrac{\partial \bar{a}_t}{\partial \bar{m}_{t-1}} = \tfrac{1 + \bar{r}_t^T}{\bar{p}_t(1 + \bar{\theta})}$. Plugging these back we get:

\begin{equation}\label{eq:good_tax1}
    (1 + \rho \bar{r}_t^T) +
    \rho(\bar{u}'(\bar{a}_t) - (1 + \bar{\theta}_t)\bar{p}_t \frac{1 + r_t^T}{(1 + \bar{\theta}_t)\bar{p}_t} = 1 / \beta
    \Rightarrow
    \frac{\bar{u}'(\bar{a}_t)}{(1 + \bar{\theta}_t) \bar{p}_t} = 
    \frac{1 + \frac{r}{\rho}}{1 + \bar{r}_t^T}
    = 1 + \frac{\frac{r}{\rho} - \bar{r}_t^T}{1 + \bar{r}_t^T}
\end{equation}

The Type 2 user solves a similar maximization, but the utility is deterministically $\underline{u}$:
\begin{align*}
    &\max_{\underline{m}_{t-1}} \left\{
    - \underline{m}_{t-1} + \beta(\expect{\sigma_t}{(1 + r_t^T)\underline{m}_{t-1} + \underline{u}(b^*_t) - (1+\theta_t) p_t b^*_t)}
    \right\}
\end{align*}

Before continuing, we show the following simple property of concave functions, that we will invoke later.
\begin{lemma}\label{lemma:poor_mans_convex}
Let $f, g : \mathbb{R} \rightarrow \mathbb{R}$ be smooth and strictly concave.
For $x_1 < x_2$ and $y_1 > y_2$, if $f'(x_1) > g'(y_1)$, $f'(x_2) \ge g'(y_2)$ and $\lambda x_1 + (1-\lambda) y_1 = \lambda x_2 + (1-\lambda) y_2 = 1$ then
\[
\lambda f(x_2) + (1-\lambda) g(y_2) \ge \lambda f(x_1) + (1-\lambda) g(y_1)
\]
\end{lemma}
\begin{proof}
    We take the derivative of $h(x) = \lambda f(x) + (1 - \lambda) g\left(\frac{1- \lambda x}{1 - \lambda}\right)$ with respect to $x$:
    \[
        h'(x) = \lambda f'(x) - \lambda g'\left(\frac{1- \lambda x}{1 - \lambda}\right).
    \]
    By our assumption, $h'(x_1) = \lambda(f'(x_1) - g'(y_1)) > 0$ and $h'(x_2) = \lambda(f'(x_2) - g'(y_2)) > 0$. Given $f, g$ are concave and smooth, $h'(x)$ is decreasing and continuous in $x$.
    Therefore $h'(x) > 0$ for $x_2 \ge x \ge x_1$. Integrating over that range:
    \[
        \int_{x_1}^{x_2} h'(x) \ge 0 \Rightarrow h(x_2) - h(x_1) \ge 0,
    \]
    concluding the proof.
\end{proof}

There are two cases for where the binding constraint on $\underline{m}_{t-1}$ is. Either the agent exhausts her tokens in the low state or in the high state. Since the two cases are similar, we only present the first case. The user will optimally choose $\underline{m}_{t-1}$ such that all of it is spent in the low state: $\underline{p}_t \underline{b}_t = \underline{m}_{t-1}$. Notice that the token holding constraint does not directly involve $r_t^T$ (although the price partly encodes this information) and that in the congested state she is crowded out and may not consume as much as she would like due to the limited block space.
Taking first order conditions and solving as before, we get:
\begin{equation}\label{eq:good_tax2}
    \frac{\underline{u}'(\underline{b}_t)}{\underline{p}_t} = 
    \frac{1 + r - \rho(1 + \bar{r}_t^T)}{1 - \rho}
    =
    1 + \frac{r - \rho\bar{r}_t^T}{1 - \rho}
\end{equation}

For the user activities in other states, we know that since not all tokens are user the activity level is optimal. Therefore it must be that:
\[
\frac{\underline{u}'(\bar{b}_t)}{(1 + \bar{\theta}_t)\bar{p}_t} = 
\frac{\underline{u}'(\bar{a}_t)}{\underline{p}_t} = 1,
\]
which actually match the first best optimum for their respective shock states. However, the conditions \Cref{eq:good_tax1} and \Cref{eq:good_tax2} for the remaining two states are bounded away from 1. If we find a choice of $\bar{\theta}_t$ such that they both get closer to 1, then we will have shown an improved outcome through taxation and burning of fees.

We use the equilibrium definition to determine other conditions restricting our choices of $\bar{p}_t, \underline{p}_t$ and $\bar{\theta}_t$:
\begin{itemize}
    \item In the high congestion case the chain is fully utilized. We have $\lambda \bar{a}_{t} + (1-\lambda) \bar{b}_{t} = 1$ and $\bar{u}'(\bar{a}_t) = \underline{u}'(\bar{b}_t) \ge \bar{p}_t \ge c'(1)$. Notice that $\bar{p}_t$ may be able to take more than one value, increasing $\bar{\theta}_t$ appropriately to control demand.
    \item The low-congestion state has the more restrictive condition $\underline{p}_t = c'(\lambda \underline{a}_t + (1-\lambda)\underline{b}_t)$.
\end{itemize}
We will try to tweak $\bar{p}_t$ and $\bar{\theta}_t$ so that $\bar{a}_t$ and $\bar{b}_t$ stay the same, but $\underline{b}_t$ and $\underline{a}_t$ increase.

Before doing so, it is useful to obtain an expression for $\bar{r}_t^T$ involving the remaining quantities. We have that:
\[
    \bar{r}_t^T = 
    \frac{\bar{\theta}_t \bar{p}_t(\lambda \bar{a}_t + (1-\lambda)\bar{\beta})t}{\lambda \bar{m}_{t-1} + (1-\lambda)\underline{m}_{t-1}}
    =
    \frac{\bar{\theta}_t \bar{p}_t(\lambda \bar{a}_t + (1-\lambda)\bar{\beta})t}{
    \lambda \frac{\bar{a}_t\bar{p}_t(1+\bar{\theta}_t)}{1 + \bar{r}_t^T} + (1-\lambda)\underline{p}_t \underline{b}_t}
\]
Observe that this is an increasing function of $\bar{\theta}_t$ and crucially, does not cancel out in \Cref{eq:good_tax1}. Therefore, there are multiple combinations of $\bar{\theta}_t$ and $\bar{p}_t$ leading to a congested equilibrium. We compare such $\bar{\theta}_t$ and $\bar{p}_t$ against an alternative without taxation and just $\tilde{p}_t$.

First off, notice that to have a congested equilibrium it needs to be $\tilde{p}_t < (1+\bar{\theta}_t) \bar{p}_t$, since the increase of $\bar{r}_t^T$ decreases the right hand side of \Cref{eq:good_tax1}. In addition (calling $\tilde{a}$ and $\tilde{b}$ the activities for $\tilde{p}_t$, we have that:
\[
\bar{u}'(\bar{a}_t) > \underline{u}'(\bar{b}_t) \text{ and } 
\bar{u}'(\tilde{a}_t) > \underline{u}'(\tilde{b}_t),
\]
since the right hand side of \Cref{eq:good_tax1} is greater than 1. In addition, $\tilde{b}_t > \bar{b}_t$ and $\tilde{a}_t < \bar{a}_t$. But then, taxation actually increases welfare in this state as well! By \Cref{lemma:poor_mans_convex}, we have that:
\[
\lambda \bar{u}(\bar{a}_t) + (1 - \lambda) \underline{u}(\bar{b}_t) > \lambda \bar{u}(\tilde{a}_t) + (1 - \lambda) \underline{u}(\tilde{b}_t).
\]
In the low congested case the result is immediate: the right hand side of \Cref{eq:good_tax2} approaches 1 due to $\bar{r}_t^T$, thus increasing the activity $\underline{b}_t$ without affecting $\underline{a}_t$, thus concluding the proof.
\end{proof}
\begin{remark}
Finally, it seems that the leeway in taxing without hurting consumption offered in the highly congested state can be taken advantage of, in order to transfer some of the surpluses (that would otherwise represent a monetary transfer to the validators) to other users, reducing their carry costs in periods of uncertainty and promoting their consumption. In addition, it seems that it is relatively easy to observe if taxation has been set properly: if the system remains congested and enough validators are motivated to process transactions, it is likely that the effects of this monetary policy are overall positive.
\end{remark}

\section{Conclusion}
We have developed a theory of token issuance policies in an infinite horizon model of a token economy with trading frictions. This is an important first step towards understanding optimal token issuance policies. However, several  directions for future work should be explored:
\subsection{Speculation and Bubbles}
Our model, assumes that users and validators, the main stakeholders in the ecosystem, are the only users of tokens. In reality a lot of token demand is driven by speculative activity by outside speculators. A natural extension of our model would incorporate speculation. A possible effect of speculation is the occurence of equilibria where the expected return of token holdings are temporarily much larger than than the risk free rate. In these circumstances, we conjecture that a more aggressive expansion of the token supply could be optimal policy.
\subsection{Continuous Time}
Our model considers discrete time that allows us to split time periods into two sub periods where the centralized token market on the one hand the transaction market on the other hand operate. The assumption of discrete time was helpful in setting up the model and introducing frictions in a natural way. However, it would also be instructive for analytical tractability to consider the continuous time limit of our model:
In the continuous time limit the carry cost can be modeled by a reduced form flow cost of holding money balances. Consumption shocks turn the optimal consumption problem over time into a stochastic control problem. This would lead us to a model similar to the model of~\cite{Cong2021}.
\subsection{Demand Estimation and Implementation}
The optimal taxation policy in our model assumes that willingness to pay and congestion can be identified when setting fees. The system is aware of the realization of the preference shock and sets fees accordingly. In reality, fees need to be set without knowing the willingness to pay of users. In particular, demand spikes need to be anticipated based on available data. Thus, the system needs to estimate demand based on observed demand in the past. Existing systems, such as Ethereum use step-wise myopic updating of the basefee based on the demand in the previous block(s). It is an open question, whether this kind of fee updating policy (or, equivalently, demand estimation) is close to the optimum or an alternative policy should be devised.  Alternatively, an auction mechanism could be deployed, in order to elicit willingness to pay directly from the users.

\bibliographystyle{plain}
\bibliography{main}

\end{document}